\renewcommand{\L}{\ensuremath{\mathcal{L}}}
\newtheorem{Definition}{Definition}
\newtheorem{Theorem}[Definition]{Theorem}
\begin{document}
\title{Moment Methods for Exotic Volatility Derivatives}

\author{Claudio Albanese}
\address{Independent consultant, level3finance.com}
\email{claudioalbanese@level3finance.net}

\author{Adel Osseiran}
\address{Mathematics, Imperial College London.}
\email{a.osseiran@imperial.ac.uk}

\date{\today}

\begin{abstract}
The latest generation of volatility derivatives goes beyond variance
and volatility swaps and probes our ability to price realized
variance and sojourn times along bridges for the underlying stock
price process. In this paper, we give an operator algebraic
treatment of this problem based on Dyson expansions and moment
methods and discuss applications to exotic volatility derivatives.
The methods are quite flexible and allow for a specification of the
underlying process which is semi-parametric or even non-parametric,
including state-dependent local volatility, jumps, stochastic
volatility and regime switching. We find that volatility derivatives
are particularly well suited to be treated with moment methods,
whereby one extrapolates the distribution of the relevant path
functionals on the basis of a few moments. We consider a number of
exotics such as variance knockouts, conditional corridor variance
swaps, gamma swaps and variance swaptions and give valuation
formulas in detail.
\end{abstract}

\maketitle

\section{Introduction}

Volatility derivatives are designed to facilitate the trading of
volatility, thus allowing one to directly take a range of tailored
views. A basic contract is the variance swap which upon expiry pays
the difference between a standard historical estimate of daily
return variance and a fixed rate determined at inception, see
\cite{D1992}, \cite{CM} and \cite{DDKZ}. A variant on this is the
corridor variance swap which differs from the standard variance swap
only in that the underlying's price must be inside a specified
corridor in order for its squared return to be included in the
floating part of the variance swap payout \cite{CL}. A further
generalization is the conditional variance swap which pays the
realized variance of an asset again within some corridor, whereby
the average is taken only over the period when the spot is in the
range. The advantage of conditional corridor variance swaps is that
they allow one to take a view on volatility that is contingent upon
the price level, gaining exposure to volatility only where required.
Although conditional variance swaps appear to be traded, see
~\cite{JP}, this is the first article in the open literature
proposing a pricing methodology.

The numerical method we present extends without difficulties to
other exotic volatility contracts such as variance swaptions, see
also \cite{CLee}. We also take the liberty of inventing exotic
volatility derivatives such as variance knock-out options. These do
not seem to be much traded although perhaps they should be. A
variance knockout can be regarded as a variation on the theme of
barrier knock-out options whereby the knock-out condition is not
triggered by the underlying crossing a certain level. Instead, a
variance knockout vanishes in case realized variance prior to
maturity exceeds a certain pre-assigned threshold. The benefit of a
variance knockout over a barrier knockout is that its hedge ratios
are smoother.

In this paper, we use operator methods for pricing. Our presentation
is self-contained for the specific purpose at hand but see the
review article \cite{A} for a more extended discussion of operator
methods. One point we should stress is that the mathematical and
numerical methods we present would work as efficiently no matter
what underlying model for the stock price process is chosen. We
select one for the purpose of discussing a concrete case and
generating sample graphs, but we are confident the reader can do
better if she intends to refine it. Our mathematical and numerical
methods are model agnostic as they do not rely on closed form
solvability and their performance is not linked to the model
definition. Any model for the stock price dynamics would work just
as well, the only limitation being that market models cannot be
accommodated within the formalism we propose.

We believe that it is important to embed econometric estimates in
the volatility process for the underlying stock price and thus make
it to reproduce realistically the features of the historical
process. Operator methods reviewed in ~\cite{A} are useful in this
respect as they allow one to construct models semi-parametrically or
even non-parametrically while resting assured that numerical
efficiency is not affected by model choice. Our working example is a
3-factor equity model which encompasses stochastic outlook dynamics,
stochastic volatility, local volatility and jumps of both finite and
infinite activity. This process can formally be expressed as
follows:
\begin{equation}
dS_t = \mu_t (S_t)dt + \sigma_t (S_t) dW_t + \mathrm{jumps}.
\label{eq:process}
\end{equation}
Here both the drift and the volatility terms can be freely
specified, thus making available additional degrees of freedom and
allowing for a time-homogeneous calibration.

Armed with the calibrated model we proceed to a moment method which
allows us to obtain conditional moments of integrals of stochastic
processes. Having the first few moments as a starting point, the
distribution is extrapolated by matching moments with
elementary probability distributions functions known in analytically
closed form. On this basis we then price volatility derivatives and
find their hedge ratios.

The article is organized as follows: The next section describes the
underlying model, and the following one the moment method. Sections
(\ref{sec:varswaps}), (\ref{sec:corridor}) and
(\ref{sec:conditional}) describe the variance swaps, corridor
variance swaps and conditional variance swaps respectively, and
their pricing within this framework. In section (\ref{sec:other}) we
look at some more variance related contracts, particularly the gamma
swap and variance knock-out options. A final section concludes.

\section{Underlying Model}
\label{sec:model}

Our working example, as a base model, is specified semi-parametrically
and amounts to a 3-factor equity model with stochastic volatility
and stochastic outlook dynamics. We find that postulating a slowly
varying stochastic outlook dynamics alongside a faster
mean-reverting volatility dynamics is essential to mimic the
historical features of the process and calibrate to both short-dated
and long dated forward volatility. The presence of jumps helps to
explain the shorter maturities, stochastic volatility for the medium
to long term and the outlook dynamics for the long-term features
of the model.

Near time-homogeneity as a key property and requirement for our
model; apart from an exogenously specified term structure of
interest rates, we achieve high precision calibration to all of our
objectives within the bounds of a nearly time-homogenous model. We
stress that time-homogeneity is not a technical constraint from the
numerical point of view, but rather an economic constraint to
discipline the calibration of otherwise very flexible model
specifications. Since parameter sensitivities and hedge ratios one
obtains numerically are remarkably smooth even if one uses single
precision floating point arithmetics on GPUs, the fit to calibration
targets can be made arbitrarily accurate by refining the model with
modest time-inhomogeneities.

The model is specified in terms of a continuous-time Markov chain on
a lattice with finitely many states. We select the following
notations for our discretization: the base lattice $\Lambda$ is made
up by triples $y = (x, a, b)$ where $x = 0 \dots 69$, $a$ takes
values $\texttt{stable}$ and $\texttt{negative}$ is an indicator of
outlook regime and $b$ takes values $\texttt{low}$,
$\texttt{medium}$ and $\texttt{high}$ is an indicator of volatility
regime. The stock price is given by a function $S(x)$. See ~\cite{A}
for the theory behind these discretizations, in particular for
rigorous convergence proofs and sharp bounds on convergence rates.

We specify the Markov generator to precisely express the model
sketched in equation (\ref{eq:process}). This matrix then needs to
be exponentiated numerically to obtain the required transition
probability kernels. We are therefore describing a three-dimensional
problem: one for the underlying stock price, one for the stochastic
volatility regime and one for the stochastic drift regime. Since we
have multiple outlook regimes characterized by different drifts, we
match the jump amplitudes non-parametrically to equal the
risk-neutral drift. We approximate interest rates $r(t)$ as
deterministic and piecewise constant in such a way to match the spot
discount function. With the interest rates taken to be piecewise
constant over time intervals of the order of a quarter, we specify a
Markov generator ${\mathcal{L}}_i$ on each such interval $[t_i,
t_{i+1})$ as having the following form:
\[ {\mathcal{L}}(x, a,b; x', a', b'; t) =
\left( {\mathcal{D}}_{\mathrm{out}}
(x; x' \vert a, b) \delta_{a a'} - {\mathcal{J}}_{\mathrm{out}}
(x, a; x', a'\vert b) \right) \delta_{bb'} \]
\begin{equation}
+ {\mathcal{J}}_{\mathrm{vg}} (x; x' \vert b) \delta_{aa'}
\delta_{bb'} + {\mathcal{V}}_{\mathrm{sv}} (b; b'\vert x, a)
\delta_{xx'} \delta_{aa'}
\end{equation}
The tri-diagonal matrix ${\mathcal{D}}_{\mathrm{out}}
(x; x' \vert a, b) $ is chosen
so that we match the risk neutral drift and satisfy
the no-arbitrage condition in equation (\ref{eq:ArbCond3}) below.
The operator ${\mathcal{J}}_{\mathrm{out}} (x, a; x', a'\vert b)$
models jumps downwards of exponential type which in our
example, are chosen to
be of size $12 \%$ if the outlook is negative and small symmetric
jumps of size $2 \%$ if the outlook is stable. Both jumps are finite
activity and occur with yearly frequency.

The operator ${\mathcal{J}}_{\mathrm{vg}} (x; x' \vert b)$ describes
a variance-gamma (infinite activity) process of log-normal
volatility dependent on the state variable $b$ and of the form
$S^{\beta(S)-1}$ where $\beta(S)$ is the function in Fig.
\ref{figbeta}. The variance rate is chosen to be $4 \%$ and is of
the order of the rate estimated econometrically, not sufficient to
sustain an implied volatility skew over medium or long time
horizons. The first two terms admit the state $x = 0$ as an
absorbing point.

\begin{figure}[htb]
\centering\noindent
        \includegraphics[width=0.5\hsize]{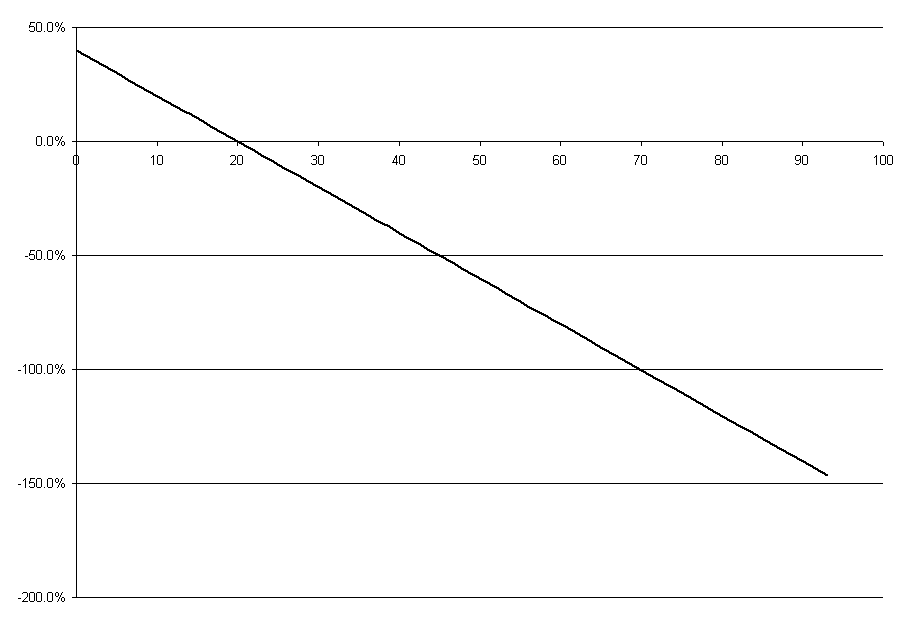}
\caption{Graph of the function $\beta(S)$} \label{varchi}
\label{figbeta}
\end{figure}

The operator ${\mathcal{V}}_{\mathrm{sv}} (b; b'\vert x, a)$
describes a mean reverting process for stochastic volatility given
by volatility regimes. It is allowed to switch between three
possible states: $\texttt{low}$, $\texttt{medium}$ and
$\texttt{high}$. Evidence that the volatility process experiences
patters which can be effectively described by means of regimes has
been previously discussed by Rebonato and Joshi in \cite{RJ}. The
underlying stock price is allowed to transition from one regime to
another only between adjacent regimes.

The Markov generator satisfies the following three conditions
corresponding to probability positivity and conservation and to
arbitrage freedom, respectively:
\begin{equation}
{\mathcal{L}}(x, a,b; x', a', b'; t) \geq 0 \quad \mathrm{if} \quad
(x,a,b) \neq (x',a',b')
\end{equation}
\begin{equation}
\sum_{x',a',b'} {\mathcal{L}}(x, a,b; x', a', b'; t) = 0
\end{equation}
\begin{equation}
\label{eq:ArbCond3}
\sum_{x',a',b'} {\mathcal{L}}(x, a,b; x', a', b'; t)
(S(x') - S(x)) = r(t) S(x)
\end{equation}
The problem of computing the propagator $U(t_1,t_2)$ satisfying the
backward equation
\begin{equation}
\frac{d}{dt_1} U(t_1,t_2) + {\mathcal{L}}(t_1)\, U(t_1,t_2) = 0
\end{equation}
with final time condition $U(t_1,t_2) = \mathbb{I}$, the identity
operator, is solved by exponentiating our generator $\mathcal{L}(t)$
\begin{equation}
U(t_1,t_2) = \exp \left( \int_{t_1}^{t_2} \mathcal{L} (s) ds
\right)
\end{equation}
The matrix exponentiation can be done in a number of ways
~\cite{ML}. The method we advocate is the linear fast exponentiation
algorithm which we find is fast, stable with respect to floating
point errors and can be implemented in single precision on highly
efficient GPU platforms.

Fast exponentiation proceeds as follows: Assume that the generators
$\mathcal{L}(t)$ are piecewise constant as a function of time.
Suppose $\mathcal{L}(t) = \mathcal{L}_{i}$ in the time interval
$[t_i, t_i + (\Delta t)_i]$. Assume $\delta t$ is chosen to be small
enough (but not smaller) that the following two conditions hold:
\begin{equation}
\mathrm{min}_{y \in \Lambda} (1 + \delta t \mathcal{L}_{i} (y,y))
\geq 1/2
\end{equation}
\begin{equation}
\log_2 \frac{(\Delta t)_i}{\delta t} = n \in \mathbb{N}
\end{equation}
This condition leads to intervals $\delta t$ of the order of
one hour of calendar time.
To compute $e^{(\Delta t)_i \mathcal{L}_{i}} (x, y)$,
we first define the elementary propagator
\begin{equation}
u_{\delta t}(x, y) = \delta_{xy} + \delta t \mathcal{L}_i(y, y)
\end{equation}
and then evaluate in sequence
\begin{equation}
u_{2\delta t} = u_{\delta t} \cdot u_{\delta t} \, ,\,
u_{4 \delta t} = u_{2 \delta t} \cdot u_{2 \delta t} \,,\, \dots ,
u_{2^n \delta t} = u_{2^{n-1} \delta t} \cdot u_{2^{n-1} \delta t}
\end{equation}
The matrix multiplication is accomplished numerically by invoking
the level-3 BLAS routine \texttt{sgemm}.

\section{The Moment Method}

\label{sec:moment} One possible strategy to evaluate moments of path
dependent functions is to use Dyson's formula, see \cite{A}. This
formula is sometimes also attributed to Kac ~\cite{DK} and provides
an expression for moments of integrals of stochastic processes.
Moment methods involve reconstructing probability distribution
functions for such integrals by extrapolating out of the knowledge
of the first few moments. This involves selecting a probability
distribution function available in closed form and matching moments.

Consider a time interval $[T,t]$ and a Markov generator
$\L(y_1,y_2; t)$, and consider the integral $I_t$ given by
\begin{equation}
\label{eq:integral1}
I_t = I(y_t,t) \equiv \int_{T}^{t} \phi(y_s,s) ds
\end{equation}

Let's introduce the following one parameter family of deformed
Markov operators parameterized by $\epsilon\in\mathbb{R}$
\begin{equation}
\L_\epsilon(y_1, y_2; t) = \L(y_1, y_2; t) + \epsilon V(y_1, y_2; t)
\end{equation}
where
\begin{equation}
V(y_1, y_2; t) = \phi(y_1; t) \delta_{y_1, y_2} + \L(y_1, y_2; t)
\chi(y_1, y_2; t)(\psi(y_2; t)-\psi(y_1; t))
\end{equation}
The functions $\phi(y;t)$, $\psi(y;t)$ and $\chi(y;t)$ are all continuous
functions. We add the $\psi$ and $\chi$ functions here as we can use them
to model certain path dependents, however in the applications we describe
below we only need the first term $\phi(y;t)$ in order to model the integral
in equation (\ref{eq:integral1}).
\begin{Theorem} {\bf }
For $n=0,1,2,\dots$, we have that
\begin{equation}
\label{eq:MomentFormula1}
\left({d\over d\epsilon}\right)^n\bigg\lvert_{\epsilon=0}
P\exp\bigg(\int_T^{t} \L_\epsilon(s) ds \bigg)(y_1, y_2) = E_T\big[
I_t^n \delta(y_{t} - y_2) \lvert y_{T} = y_1
 \big].
\end{equation}
\end{Theorem}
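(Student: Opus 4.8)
The plan is to expand the time-ordered exponential $P\exp\big(\int_T^t \L_\epsilon(s)\,ds\big)$ as a Dyson series in powers of $\epsilon$, identify the coefficient of $\epsilon^n$, and recognize it as the Feynman--Kac-type expectation on the right. First I would write the time-ordered exponential as its Dyson expansion,
\[
P\exp\bigg(\int_T^t \L_\epsilon(s)\,ds\bigg)
= \sum_{k\ge 0} \int_{T\le s_1\le\cdots\le s_k\le t}
U(s_1;s_2\cdots) \L_\epsilon(s_1)\,U(\cdots)\,ds_1\cdots ds_k,
\]
or more cleanly use the perturbative (Duhamel) expansion of $\L_\epsilon = \L + \epsilon V$ around the base propagator $U(t_1,t_2)$: iterating the variation-of-constants identity
\[
U_\epsilon(T,t) = U(T,t) + \epsilon\int_T^t U(T,s)\,V(s)\,U_\epsilon(s,t)\,ds
\]
produces $U_\epsilon(T,t) = \sum_{k\ge 0}\epsilon^k \int_{T\le s_1\le\cdots\le s_k\le t} U(T,s_1)V(s_1)U(s_1,s_2)\cdots V(s_k)U(s_k,t)\,ds_1\cdots ds_k$. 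Applying $(d/d\epsilon)^n\big|_{\epsilon=0}$ kills all terms except $k=n$ and leaves $n!$ times the $n$-fold ordered integral.

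Next I would give the probabilistic reading of each factor. The base propagator $U(s_j,s_{j+1})(y,y')$ is the transition kernel $E[\delta(y_{s_{j+1}}-y')\mid y_{s_j}=y]$ of the Markov chain generated by $\L$; the insertions of $V$ at the ordered times $s_1<\cdots<s_n$ play the role of the $n$ "markings" of the path. Taking for simplicity the pure case $V(y_1,y_2;s)=\phi(y_1;s)\delta_{y_1,y_2}$ relevant to the applications, the $V$-insertion is diagonal and simply multiplies by $\phi(y_{s_j};s_j)$, so the $n$-fold integral of the chain of kernels equals $E_T\big[\int_{T\le s_1\le\cdots\le s_n\le t}\phi(y_{s_1};s_1)\cdots\phi(y_{s_n};s_n)\,ds_1\cdots ds_n\;\delta(y_t-y_2)\mid y_T=y_1\big]$. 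By symmetry of the integrand in $s_1,\dots,s_n$, the ordered simplex integral equals $\tfrac1{n!}\big(\int_T^t \phi(y_s;s)\,ds\big)^n = \tfrac1{n!}I_t^n$, and the factor $n!$ from the $n$-th derivative cancels the $\tfrac1{n!}$, giving exactly the right-hand side of (\ref{eq:MomentFormula1}). For the general $V$ containing the $\chi(\psi(y_2)-\psi(y_1))$ term one argues identically, noting that an off-diagonal insertion $\L(y,y';s)\chi(\psi(y')-\psi(y))$ inserted between two propagators corresponds, in the path-integral picture, to recording the increment $\psi(y_{s_j^+})-\psi(y_{s_j^-})$ at a jump time of the chain; summing these over jumps reconstructs the additive functional $\int \chi\,d\psi$ along paths, and the same symmetrization collapses the simplex integral into the $n$-th power of the total functional $I_t$.

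The main obstacle is making the manipulations of the time-ordered exponential rigorous: justifying the convergence of the Dyson/Duhamel series and the interchange of $(d/d\epsilon)^n$ with the infinite sum and with the time integrals. On a finite lattice $\Lambda$ this is routine — all operators are bounded matrices, the series converges in operator norm uniformly for $\epsilon$ in a neighbourhood of $0$, and Fubini plus dominated convergence license every interchange — so I would state these boundedness facts briefly and move on. A secondary, more bookkeeping-level point is the combinatorial step: verifying that $(d/d\epsilon)^n\big|_0$ applied to $\sum_k \epsilon^k(\cdots)$ indeed selects $k=n$ with the factor $n!$, and then that the symmetrization of the ordered integral over the simplex $\{T\le s_1\le\cdots\le s_n\le t\}$ produces precisely the $\tfrac1{n!}$ that cancels it; this is where a careful reader wants to see the identity $\int_{[T,t]^n} = n!\int_{\text{ordered}}$ invoked explicitly. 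Everything else is the standard dictionary between Dyson expansions and Feynman--Kac expectations for Markov chains, for which I would cite \cite{A}.
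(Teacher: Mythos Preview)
Your proposal is correct and follows essentially the same route as the paper: expand the perturbed propagator as a Dyson series in $\epsilon$, pick off the $n$-th coefficient, and identify the resulting time-ordered integral with $\tfrac{1}{n!}E_T[I_t^n\,\delta(y_t-y_2)\mid y_T=y_1]$ via symmetrization over the simplex. The only cosmetic difference is that the paper obtains the Dyson expansion from Neper's product formula and collecting powers of $\epsilon$, whereas you iterate the Duhamel (variation-of-constants) identity; these are equivalent derivations of the same series, and your treatment of the combinatorics and of the boundedness issues on the finite lattice is in fact more explicit than the paper's.
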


\begin{proof} Consider Neper's formula for the propagator
\begin{equation}
P \exp\bigg(\int_T^t \L_\epsilon(s) ds \bigg) =
\lim_{N\to\infty} P \prod_{i=1}^N \left( 1 + {t - T\over N} (\L(t_i) +
\epsilon \phi(t_i)) \right)^N
\end{equation}
where $t_i = T + {i t \over N}$.  By collecting similar powers in
$\epsilon$, one finds Dyson's formula
\begin{align}
P \exp\bigg(\int_T^t \L_\epsilon(s) ds \bigg)& = \exp((t-T)\L + \\
& \epsilon \int_T^{t} ds_1
\bigg( e^{(s_1-T)\L} V(s_1) e^{(t-s_1)\L}\bigg) +\\
& \sum_{n=2}^\infty \epsilon^n \int_T^{t} ds_1 ...
\int_{s_{n-1}}^{t} ds_n  \bigg( e^{(s_1-T)\L} V(s_1) e^{(s_2-s_1)\L}
.... V(s_{n}) e^{(t-s_{n})\L}\bigg).
\end{align}
The time-ordered integrals above are proportional to conditional moments, i.e.
\begin{equation}
P\exp\bigg(\int_T^t \L_\epsilon(s) ds \bigg)(y_1, y_2) =
\sum_{n=0}^\infty {\epsilon^n\over n!} E_T\big[ I_t^n
\delta\big(y_{t} - y_2\big) \lvert y_T = y_1\big].
\end{equation}
Here, the factorials originate from the time ordering.
\end{proof}

A technique which is numerically stable in many situations of
practical relevance is to numerically differentiate with respect to
$\epsilon$ the deformed propagators $P \exp\bigg(\int_T^t \L_\epsilon(s) ds \bigg)(y_1, y_2)$
and evaluate at $\epsilon=0$. This technique can be used to obtain
the first moments of $I_t$ on any given bridge for the underlying
Markov process. In most applications, we find that two moments
suffice to extrapolate the probability distribution function to
sufficient accuracy. To do so, it is convenient to choose from among
the probability distribution functions which are analytically
tractable. For instance, starting from the first two moments only,
one can use either the log-normal or the chi-square distribution. In the
above theorem, one only needs to compute the formula for $n$ equal to
the number of moments we want to use.

One may notice that Dyson formula in our Finance application is used
differently from how it is used in Physics. In Physics one usually
has a perturbative method to evaluate moments and the purpose is to
reconstruct the kernel. In Finance, fast exponentiation gives a good
way to evaluate the kernel and Dyson formula is used to evaluate
moments by numerical differentiation.

\section{Contracts on Realized Variance}
\label{sec:varswaps}

Having introduced a stock price model and a moment formula for path
functionals, we next consider volatility derivatives.

A variance swap is a financial contract that upon expiry pays the
difference between a capped measure of the historically realized
variance of daily stock returns and a fixed swap rate determined at
inception. As usual, the swap rate is initially struck at
equilibrium so that the variance swap has zero cost to enter
\cite{CM}.

Variance swaps of maturity $t$ and time of issuance $T$ have a payoff
 given by
\begin{equation}
\label{eq:VarSwapPayoff}
\mathrm{min} \left( \frac{1}{t-T}
\int_{T}^{t} v(y_s,s)ds, f \cdot SR^2 \right) - SR^2
\end{equation}
where $SR$ is the swap rate and $f$ is a factor, a typical value
being $f = 6.2$. Here, $v(y_1, t)$ is the instantaneous variance
defined as follows:
\begin{equation}
v(y_1, t) = \sum_{y_2} {\mathcal{L}}
(y_1,y_2;t) \log^2 \left( \frac{S(y_2)}{S(y_1)} \right)
\end{equation}
The cap is customary on single stocks as, in case of default,
realized log-normal variance may actually diverge. Instead, variance
swaps on indices are usually uncapped and the equilibrium swap rate
is simply given by the first moment
\begin{equation}
\label{eq:VarSwapSimple} \frac{1}{t-T} E_T \left[ \left.
\int_{T}^{t} v(y_s,s)ds \, \delta \big(y_{t} - y_2\big) \right\vert
y_T = y_1 \right] - SR^2.
\end{equation}
The expectation is given by the first derivative in equation
(\ref{eq:MomentFormula1}) without any need for moment matching.
Summing over all final points $y_2$ removes the bridge condition on
the final points.

The payoff of a volatility swap is
\[ \mathrm{min} \left(
\sqrt{\frac{1}{t-T} \int_{T}^{t} v(y_s,s) ds},SR \right) - SR \]
To price these contracts one needs to evaluate the distribution
of realized variance on a bridge, i.e. of the functional
\begin{equation}
RV(y_2) = \delta \left( y_{t} - y_2 \right) \frac{1}{t-T}
\int_{T}^{t} v(y_s,s) ds
\end{equation}
In this case, moment matching is required even in case the cap is
lifted.

\subsection{Matching the moments to a chi-square distribution}
\label{sec:match1}

Let
\[ m_1 = E_T \big[ I_t \left. \delta
\big(y_{t} - y_2\big) \right\vert y_T = y_1 \big]
\quad \quad \mathrm{and} \quad \quad m_2 =
E_T \big[ I_t^2 \left. \delta
\big(y_{t} - y_2\big) \right\vert y_T = y_1 \big] \]
be our first and second pre-assigned moments.
The standard chi-square distribution is given by
\[ f(x) = \frac{1}{2 \Gamma \left( \frac{a}{2} \right) }
\left( \frac{x}{2} \right)^{a/2 - 1} e^{-x/2} \] where $a$ is the
number of degrees of freedom. The first and second
 moments of this distribution are
\[ E[x] = a  \, , \,\,\,\,\,\,\,\,\,\,\,\,\, E[x^2] = a(a+2) \]
To match two pre-assigned moments $m_1$, $m_2$, one can pass to the new variable
\[ \xi = \frac{m_1}{a} x \]
and chose
\[ a = \frac{2m_1^2}{m_2 - m_1^2} \]
The cumulative distribution function is
 \[ F(x;a) = \frac{\gamma \left(\frac{a}{2},\frac{x}{2} \right)}
{\Gamma \left(\frac{a}{2} \right)}, \] where $\Gamma(z)$ and
$\gamma(z,a)$ are the gamma and incomplete gamma functions
respectively, i.e:
\[ \Gamma(z) = \int_{0}^{\infty} s^{z-1} e^{-s} ds \,,\,\,\,\,\,
\gamma(z,a) =  \int_{0}^{a} s^{z-1} e^{-s} ds. \] We find that
\[ E_T \left[ \mathrm{min} (RV,RV_{\mathrm{max}})
\, \delta(y_{t} - y_2)  \right] =
\frac{m_1}{a}  \left[ K(1 - F(K;a)) + a F(K;a +2) \right] \]
and
\[ E_T \left[ \mathrm{min} (\sqrt{RV},\sqrt{RV_{\mathrm{max}}})
\, \delta(y_{t} - y_2)  \right] =
\sqrt{\frac{m_1}{a}} \left[ \sqrt{K} (1- F(K;a))
+ \frac{\sqrt{2} \gamma \left(\frac{a+1}{2},\frac{K}{2} \right)}
{\Gamma \left(\frac{a}{2} \right)} \right] \]
where
\[ a = a(y_1,y_2) = \frac{2m_1 (y_1,y_2)^2}{m_2 (y_1,y_2) - m_1(y_1,y_2)^2}
\,,\,\,\,\, K = K(y_1,y_2) = \frac{a(y_1,y_2)}{m_1(y_1,y_2)}
RV_{\mathrm{max}} \]
These formulas allow one to price both variance swaps and volatility swaps.
Since the dependency on the swap rate in both cases is non-linear,
an exact calculation requires using a root finder. This may be avoided
as an excellent approximation is often obtained by estimating the
variance swap rate as follows:
\begin{equation}
VARSR_T = \sqrt{\sum_{y_2} E_T \lbrack
\mathrm{min} (RV,RV_{\mathrm{max}}) \, \delta(y_{t} - y_2) \rbrack }
\end{equation}
and the volatility swap rate as follows:
\begin{equation}  VOLSR_T = \sum_{y_2} E_T \left[ \mathrm{min} (\sqrt{RV}
,\sqrt{RV_{\mathrm{max}}}) \, \delta(y_{t} - y_2)  \right]
\end{equation}
where
\begin{equation}
\label{eq:RVmax}
RV_{\mathrm{max}} = f \sum_{y_3} U(y_T,T;y_3,t) m_2(y_T,y_3)
\end{equation}

\begin{figure}[htb]
\centering\noindent
        \includegraphics[width=1\hsize]{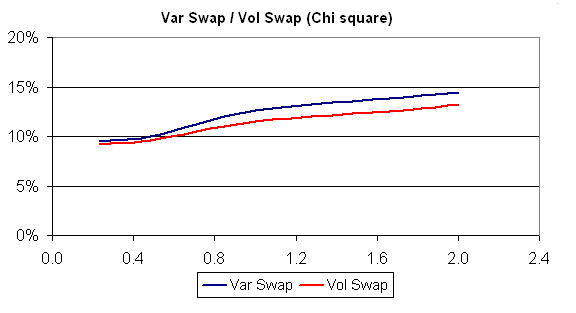}
\caption{Volatility and variance swaps (Chi-square)} \label{varchi}
\end{figure}

\subsection{Matching the moments to a log-normal distribution}
\label{sec:match2}

Let $x$ be log-normally distributed with probability distribution
function
\[ f(x;\mu,\sigma) = \frac{1}{x \sigma \sqrt{2 \pi}}
e^{-(\ln x - \mu)^2 / 2 \sigma^2} \]
then the first two moments are given by
\[ E[x] = e^{\mu + \sigma^2/2} \,\,\,\,\,\,\,\,\,\,\,
\textrm{and} \,\,\,\,\,\,\,\,\,\,\,
E[x^2] = e^{2 \mu + 2 \sigma^2}. \]
Given these expected values, we can obtain relationships for
$\mu$ and $\sigma$ as
\begin{equation}
\label{eq:LogNormalMoments}
\mu = \log \left( \frac{E[x]^2}{\sqrt{E[x^2]}} \right)
\,\,\,\,\,\,\,\,\,\, \textrm{and} \,\,\,\,\,\,\,\,\,\,\,
\sigma^2 = \log \left( \frac{E[x^2]}{E[x]^2} \right)
\end{equation}
and take the pre-assigned moments to be equal to these expected values
$E[x] = m_1$ and $E[x^2] = m_2$. We obtain
\[ E_T [ \mathrm{min} (RV,RV_{\mathrm{max}})
\, \delta(y_{t} - y_2)] \] expressed in terms of the well known
standard normal cumulative distribution function as
\begin{equation} e^{\mu + \sigma^2/2}
{\mathcal{N}} \left( \frac{\log(RV_{\mathrm{max}}) - \mu - \sigma^2/2}
{\sigma} \right)
+ RV_{\mathrm{max}} \, {\mathcal{N}}
\left( \frac{\log(RV_{\mathrm{max}}) - \mu}{\sigma} \right)
\end{equation}
and the second expectation is
\[ E_T \left[ \mathrm{min} (\sqrt{RV},\sqrt{RV_{\mathrm{max}}})
\, \delta(y_{t} - y_2)  \right] = \]
\begin{equation} e^{\frac{1}{8} (4 \mu + \sigma^2)}
{\mathcal{N}} \left( \frac{\log(RV_{\mathrm{max}}) - \mu - \sigma^2/2}
{\sigma} \right) + \sqrt{RV_{\mathrm{max}}}
\, {\mathcal{N}} \left( \frac{\log(RV_{\mathrm{max}}) - \mu}
{\sigma} \right)
\end{equation}
where $\mu$ and $\sigma$ are specified above in terms
of $m_1$ and $m_2$.

\begin{figure}[htb]
\centering\noindent
        \includegraphics[width=1\hsize]{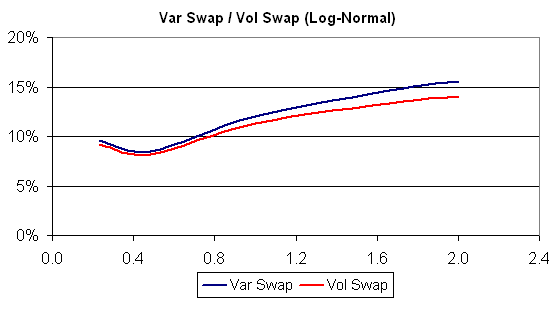}
\caption{Volatility and variance swaps (log-normal)} \label{varln}
\end{figure}

\subsection{Matching higher moments with the Pearson distributions}
\label{sec:match3}

The log-normal and the chi-square distributions allow us to match
the first two moments, here we look at three. Consider the Pearson
Type III distribution which has probability distribution function
\[ f(x) = \frac{1}{b \Gamma(p)} \left( \frac{x - a}{b}
\right)^{p-1} e^{-(x- a)/b} \]
defined on $[a, +\infty)$. The special case of this, when
$a = 0$, $b = 2$ and $p$ is half of an integer, gives the
Chi-Squared distribution. In general, the moments are given by
\begin{equation}
E[x] = a + b p
\end{equation}
\[ E[x^2] = (a + b p)^2 + b^2 p \]
\[ E[x^3] = (a + b p)^3 + 3 b^2 p (a + b p)
+ 2 b^3 p \] and matching these with our pre-assigned moments
$m_1,m_2$ and $m_3$ and computing the values of $a,b$ and $p$ in
terms of these moments we find
\begin{equation}
a = m_1 - \frac{2(m_2 - m_1^2)^2}{m_3 + 2m_1^3 - 3m_1m_2}
\end{equation}
\[ b = \frac{m_3 + 2m_1^3 - 3m_1m_2}{2(m_2 - m_1^2)} \]
\[ p =  \frac{4(m_2 - m_1^2)^3}{(m_3 + 2m_1^3 - 3m_1m_2)^2} \]
Then the expectation in concern, expressed in terms of the
chi-squared cumulative distribution function $F(x;a)$ (section
~\ref{sec:match1}) is
\[ E_T \left[ \mathrm{min} (RV,RV_{\mathrm{max}})
\, \delta(y_{t} - y_2) \right]
= (a+bp - RV_{\mathrm{max}}) F_{\mathrm{Chi}}
\left(2p, 2\frac{RV_{\mathrm{max}}-a}{b} \right) \]
\begin{equation}
- b \left( \frac{RV_{\mathrm{max}}-a}{b} \right)^p
\frac{e^{-\frac{RV_{\mathrm{max}}-a}{b}}}{\Gamma(p)}
+ (2 RV_{\mathrm{max}} - a - bp)
\end{equation}

\section{Corridor Variance Swaps}
\label{sec:corridor}

Corridor variance swaps differ from standard variance swaps in that
the underlying's price must be inside a specified corridor in order
for its squared return to be included in the floating part of the
variance swap payout \cite{CL}.

Corridor variance swap are sometimes engineered so that they afford
full protection against situations where the variance spikes
accompanied by a decline in stock value, but offer only limited
protection in case the volatility surge is associated to a stock
price rally. Following \cite{CL}, one could achieve a robust static
replication by means of a strip of options as in the case of the
variance swap, only that here the hedge is limited to the
pre-specified corridor.
\\
The payoff of corridor variance swaps are given by
\begin{equation}
\frac{1}{t-T} \int_{T}^{t} v(y_s,s) \,\mathbf{1}(L < S(y_s) < H) ds  - SR^2
\end{equation}
so we are interested in the integral
\[ I_t = \frac{1}{t-T} \int_{T}^{t} \phi(y_s,s) ds \]
where
\begin{equation}
\phi(y_1, t) = \sum_{y_2} {\mathcal{L}}
(y_1,y_2;t) \log^2 \left( \frac{S(y_2)}{S(y_1)} \right)
\mathbf{1}(L < S(y_2) < H)
\end{equation}
\begin{figure}[htb]
\centering\noindent
        \includegraphics[width=1\hsize]{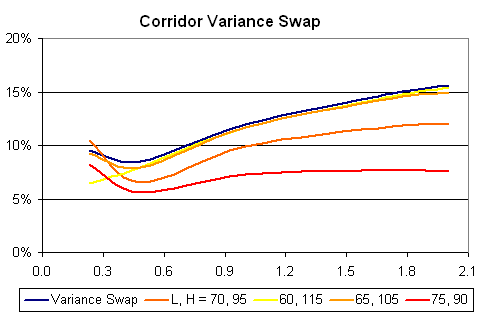}
\caption{Corridor Variance Swaps}
\label{corridor}
\end{figure}
is the "instantaneous \textit{corridor} variance",
and solve the problem of computing its moments by specifying in
this case the operator
\[ {\mathcal{L}}_{\epsilon} (y_1,y_2;t) = {\mathcal{L}} (y_1,y_2;t)
+ \epsilon \phi(y_1,t) \delta_{y_1 y_2} . \] From here the
methodology used in the case of ordinary variance swaps carries
through and one is reduced to computing and matching moments.

\section{Conditional Variance Swaps}
\label{sec:conditional}

Conditional corridor variance swaps are an interesting variant as
they can more easily be interpreted in intuitive terms. In this
case, the payoff is given by the accrued variance divided by the
time elapsed in the range, not the total number of days ~\cite{JP}.
Namely, the payoff is given by
\begin{equation}
\label{eq:CondPayoff}
\frac{\int_{T}^{t} v(y_s,s) \,\mathbf{1}(L < S(y_s) < H) ds}{
\int_{T}^{t} \mathbf{1}(L < S(y_s) < H) ds} - SR^2
\end{equation}
The principal use of conditional derivatives is when
an investor expects a particular asset (e.g. market level,
volatility) to outperform contingent on a level of another asset
being breached. In the case of a conditional variance swap,
the investor may be anticipating that volatility will increase/decrease
if the market were to rally/sell-off or stay within a
range \cite{JP}. While an investor can select the corridor freely, two examples
are the conditional up-variance and down-variance swaps.
Up-variance accrues realised variance only when the
underlying is above a pre-specified level (i.e. no upper barrier),
while down-variance is accrued only when the underlying is
below the specified barrier (i.e. no lower barrier).

To apply the moment method to this case, the first thing to note is
that essentially we are modelling the two integrals appearing in the
payoff at the same time. We are going to need a bi-variate
distribution to do this. Let's write:
\[ I_{t}^{(1)} = \int_{T}^{t} v(y_s,s) \,\mathbf{1}(L < S(y_s) < H) ds \]
\[ I_{t}^{(2)} = \int_{T}^{t} \mathbf{1}(L < S(y_s) < H) ds \]
and in order to compute the expectation
\begin{equation}
\label{eq:Expectation1}
E \left[ \frac{I_{t}^{(1)}}{I_{t}^{(2)}} \right]
\end{equation}
we'll need the following expectations:
\[ E \left[ I_{t}^{(1)} \right] \,,\,\,\,\,
E \left[ I_{t}^{(2)} \right] \,,\,\,\,\,
E \left[ \left( I_{t}^{(1)} \right)^2 \right] \,,\,\,\,\,
E \left[ \left( I_{t}^{(2)} \right)^2 \right] \,,\,\,\,\,
\mathrm{and} \,\,\,\,\,\,
E \left[ I_{t}^{(1)} I_{t}^{(2)} \right] \]
To tackle this problem consider the operator
\begin{equation}
{\mathcal{L}}_{\epsilon_1,\epsilon_2} (y_1,y_2)
= {\mathcal{L}} (y_1,y_2)
+ \epsilon_1 \phi(y_1) \delta_{y_1 y_2}
+ \epsilon_2 \,\psi(y_1) \delta_{y_1 y_2}
\end{equation}
where
\begin{equation}
\phi(y_1, t) = \sum_{y_2} {\mathcal{L}}
(y_1,y_2;t) \log^2 \left( \frac{S(y_2)}{S(y_1)} \right)
\mathbf{1}(L < S(y_1) < H)
\end{equation}
and
\begin{equation}
\psi(y_1) = \mathbf{1}(L < S(y_1) < H)
\end{equation}
then the problem of computing these moments is solved using the
above procedure. In particular, we have
\[ \left. \frac{\partial}{\partial \epsilon_1}
\right\vert_{\epsilon_1 = 0}
e^{((t-T) {\mathcal{L}}_{\epsilon_1,\epsilon_2})}
(y_1,y_2)
= E \left[ \left. I_{t}^{(1)} \right\vert
y_T = y_1, y_{t} = y_2  \right] \]
and
\[ \left. \frac{\partial^2}{\partial \epsilon_1^2}
\right\vert_{\epsilon_1 = 0}
e^{((t-T) {\mathcal{L}}_{\epsilon_1,\epsilon_2})}
(y_1,y_2)
= E \left[ \left. \left( I_{t}^{(1)} \right)^2
\right\vert y_T = y_1, y_t = y_2  \right] \]
similarly (but with respect to $\epsilon_2$) for
\[ E \left[ \left. I_{t}^{(2)} \right\vert
y_T = y_1, y_{t} = y_2 \right]
\,\,\,\,\,\,\, \mathrm{and} \,\,\,\,\,\,\,
E \left[ \left. \left( I_{t}^{(2)} \right)^2
\right\vert y_T = y_1, y_{t} = y_2 \right] \]
The joint expectation will involve the mixed derivative:
\begin{equation}
E \left[ I_{t}^{(1)} I_{t}^{(2)} \right] = \left.
\frac{\partial^2}{\partial \epsilon_1 \partial \epsilon_2}
\right\vert_{\epsilon_1,\epsilon_2 = 0}
e^{((t-T) {\mathcal{L}}_{\epsilon_1,\epsilon_2})}
(y_1,y_2)
\end{equation}
and once computed, we make use of these expectations to match a bivariate
distribution. For simplicity of notation we leave out the conditional part
of these expectations, noting that all the moments we obtain are conditional
to the initial and final points.
\subsection{Bi-Variate log-normal distribution:}

Let
\begin{displaymath}
\left( \begin{array}{c}
Y_1 \\
Y_2 \\
\end{array} \right)
=
\left( \begin{array}{c}
\log(X_1) \\
\log(X_2) \\
\end{array} \right)
\sim N
\left( \left( \begin{array}{c}
\mu_1 \\
\mu_2 \\
\end{array} \right) ,
\left( \begin{array}{c}
\sigma_1 \\
\sigma_2 \\
\end{array} \right) \right)
\end{displaymath}
then $X_1$ and $X_2$ have joint probability distribution function
~\cite{JK}:
\begin{equation}
\label{eq:BivariatePdf}
f(x_1,x_2) = \frac{1}{2 \pi \sigma_1 \sigma_2 \sqrt{1- \rho^2} x_1 x_2}
\cdot
\end{equation}
\[  \exp \left\{ -\frac{1}{2(1- \rho^2)} \left[
\left( \frac{\log x_1 - \mu_1}{\sigma_1} \right)^2
- 2 \rho \frac{(\log x_1 - \mu_1)(\log x_2 - \mu_2)}{\sigma_1 \sigma_2}
+ \left( \frac{\log x_2 - \mu_2}{\sigma_2} \right)^2
\right] \right\} \]
the bivariate log-normal distribution, where $\rho$ is the correlation between
$X_1$ and $X_2$. Both $X_1$ and $X_2$ are log-normally distributed with
\[ E[X_i] = e^{\mu_i + \sigma_i^2/2} \,,\,\,\,\,\,\,\,
\mathrm{and} \,\,\,\,\,\, E[X_i^2] = e^{2 \mu_i + 2 \sigma_i^2}
\,,\,\,\,\, i = 1,2. \] Matching these with the pre-assigned
moments, and solving for $\mu_i$ and $\sigma_i$ we find (similarly
to section \ref{sec:match2})
\[ \mu_i = \log \left(
E \left[ I_{t}^{(i)} \right]^2 \Big/ \,
\sqrt{E \left[\left(I_{t}^{(i)} \right)^2 \right]} \right)
\,\,\,\,\,\,\, \textrm{and} \,\,\,\,\,\,\,\,\,
\sigma_i^2 = \log \left( E \left[ \left( I_{t}^{(i)}\right)^2
\right] \Big/ \, E \left[ \left( I_{t}^{(i)} \right) \right]^2 \right) \]
Moreover, the mixed term is
\[ E \left[X_1 X_2 \right]
= E \left[ e^{Y_1 + Y_2} \right]
= e^{\mu_1 + \mu_2 + \frac{1}{2} (\sigma_1^2
+ \sigma_2^2 + 2 \rho \sigma_1 \sigma_2)} = E[X_1]E[X_2]
\, e^{\rho \sigma_1 \sigma_2} \]
which gives $\rho$ (in terms of the pre-assigned moments):
\begin{equation}
\rho = \frac{1}{\sigma_1 \sigma_2}
\log \frac{E \left[ I_{t}^{(1)} I_{t}^{(2)} \right]}
{E \left[ I_{t}^{(1)} \right] E \left[I_{t}^{(2)} \right]}
\end{equation}

\begin{figure}[htb]
\centering\noindent
        \includegraphics[width=1\hsize]{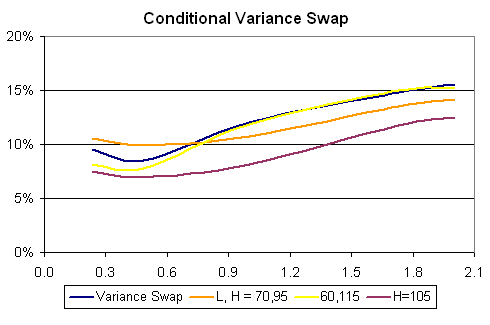}
\caption{Conditional Variance Swaps:
A standard variance swap, two conditional variance swaps of different
corridors, and a 105\% (of spot) conditional up-variance swap.}
\label{conditional}
\end{figure}

Now we are in a position to compute the expectation
$E \left[ I_{t}^{(1)} \Big/ \, I_{t}^{(2)} \right]$.
Given the bivariate log-normal distribution, we have
\[ E \left[ \frac{X_1}{X_2} \right]
= E \left[ e^{Y_1 - Y_2} \right] = E \left[ e^{Y_1 + Y^{'}_2} \right]
\]
where $Y^{'}_2 = -Y_2$ will also be normally distributed
$\sim N(-\mu_2,\sigma_2^2)$, so
\[ E \left[ \frac{X_1}{X_2} \right]
= e^{\mu_1 + \frac{1}{2} \sigma_1^2 } \,
e^{- \mu_2 + \frac{1}{2} \sigma_2^2 } \,
e^{-\rho \sigma_1 \sigma_2} \]
and the expectation (\ref{eq:Expectation1}) is given by
\begin{equation}
E \left[ \frac{I_{t}^{(1)} }{ I_{t}^{(2)} } \right]
= E \left[ I_{t}^{(1)} \right] \cdot
\frac{E \left[ \left( I_{t}^{(2)} \right)^2 \right]}
{E \left[ I_{t}^{(2)} \right]^3} \cdot
\frac{E \left[ I_{t}^{(1)} \right] E \left[I_{t}^{(2)} \right]}
{E \left[ I_{t}^{(1)} I_{t}^{(2)} \right]}
=
\frac{E \left[ I_{t}^{(1)} \right]^2
E \left[ \left( I_{t}^{(2)} \right)^2 \right]}
{E \left[ I_{t}^{(2)} \right]^2
E \left[ I_{t}^{(1)} I_{t}^{(2)} \right] }
\end{equation}
Plugging this in the payoff of the conditional variance swap
(\ref{eq:CondPayoff}) gives us the value of the conditional variance
swap.

We could go a step further and compute the analogy of the capped
variance swap in equation (\ref{eq:VarSwapPayoff}) where we are now
interested in the payoff $E\left[ \mathrm{min} \left(
\frac{X_1}{X_2}, CV_{\mathrm{Max}}  \right) \right]$. Writing this
expectation as a double integral over the bi-variate probability
distribution function (\ref{eq:BivariatePdf}) we find
\begin{equation}
E\left[ \mathrm{min} \left( \frac{X_1}{X_2},
CV_{\mathrm{Max}}  \right) \right]
= \int_{0}^{\infty}  \int_{0}^{X_2 CV_{\mathrm{Max}}}
\left( \frac{X_1}{X_2} - CV_{\mathrm{Max}} \right) f(X_1,X_2)
dX_1 dX_2 + CV_{\mathrm{Max}}
\end{equation}
This double integral will need to be evaluated numerically; one could
apply a Gaussian quadrature.

\section{Gamma Swaps and Variance Knockouts}
\label{sec:other}

In this section, we apply the moment method to other volatility
contracts, namely options on realized variance, variance knock-out
options and gamma swaps.

\subsection{Options on Realized Variance}

Let's consider the example of a call option on realized variance, we
are interested in the payout $E \left[ (RV - K)^{+} \right]$.
Computing the conditional moments for the realized variance as
described above, we can sum over all final points $y_2$ to obtain
the unconditional moments. Matching these with a distribution we
know, as done above, we can look at the payoff
\begin{equation}
E_T\left[ (RV - K)^{+} \right]
\end{equation}
which under the log-normal case is given by the Black-Scholes formula, with
the correct inputs from equation (\ref{eq:LogNormalMoments}).

\subsection{Variance Knock-out Options}

The payoff of a variance knockout call-option with volatility
barrier at $H$ is
\begin{equation}
(D(T)S_T - K)_{+} \mathbf{1} \left(\frac{1}{t-T} \int_{T}^{t}
v(y_s, s)ds < H^2 \right)
\end{equation}
Variance conditional knock-out options are priced as follows:
\begin{equation}
VKO_T = \sum_{y_2} U(y_T,T;y_2,t) \, \mathbf{Pr}
\left(RV(y_2) < H^2 \right)
\left( D(T')S(y_2,t) - K \right)_+
\end{equation}
where $\mathbf{Pr} \left(RV(y_2) < H^2 \right)$, the probability
that $RV(y_2) < H^2$ is obviously the cumulative distribution function
(for the distribution which we chose to model $RV$ with).

\subsection{Gamma swaps}

The gamma swap (also known as weighted variance swap) is similar to
the variance swap only we have an additional spot factor. They
reduce the weighting of the downside scenarios but introduce a
market level exposure. The gamma variance swap pays at maturity the
realized weighted variance of an asset
\[ \frac{252}{m} \sum_{i}^{n} \log
\left( \frac{S_{i}}{S_{i-1}} \right)^2 \frac{S_{i}}{S_{0}} \]
We note the following features of the Gamma swap
\\
(i) As the spot approaches zero, we do not have a blow up as in the
case of the variance swap: $\lim_{x \to 0} x \log(x) = 0$,
\\
(ii) Similar to the variance swap case,
one can split the gamma swap payoff
into a European payoff and a daily hedging strategy. However in the case
of the Gamma swap, the weights in the replicating portfolio are of order
$1/K$ compared to those in the variance swap of order $1/K^2$.

To price the gamma swap within our framework, we look at the
instantaneous level: we no longer have the instantaneous
variance, but the "instantaneous gamma" given by
\[ g(y_1, t) = \sum_{y_2} {\mathcal{L}}
(y_1,y_2;t) \log^2 \left( \frac{S(y_2)}{S(y_1)} \right)
\frac{S(y_2)}{S(y_1)} \]
Gamma swaps of maturity $t$ and time of issuance $T$ have a payoff
given by
\begin{equation}
\frac{1}{t-T} \int_{T}^{t} g(y_s,s)ds - SR^2
\end{equation}

\begin{figure}[htb]
\centering\noindent
        \includegraphics[width=1\hsize]{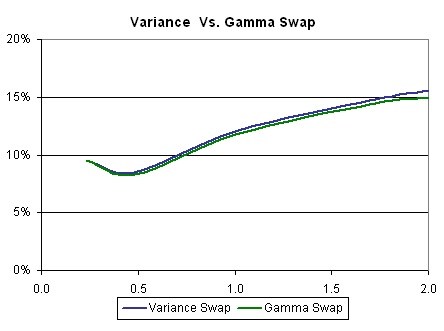}
\caption{Gamma swap}
\label{gamma}
\end{figure}

\section{Summary and extensions}

Using operator theory we presented a 3-factor equity model which is
semi-parametric or even non-parametric and allows for a
time-homogeneous calibration. A moment method based on Dyson's
formula is then given and used to compute the moments of integrals
of the underlying Markov process. We then apply this method to a
various volatility derivatives which we find to be well suited to
moment methods. We consider a number of exotics such as variance
knockouts, conditional corridor variance swaps, variance swaptions
and gamma swaps. Straightforward extensions would include for
instance conditional corridor variance knockout options, forward
starting variance swaptions, etc..

\bibliographystyle{giwi}
\bibliography{moments}

\end{document}